\documentclass[11pt,a4paperwide]{amsart}
\usepackage[left=3.5cm, top=3cm, bottom=3cm, right=3.5cm]{geometry}
\usepackage{amsmath,amsfonts,amssymb,amsthm,amscd}

\usepackage{amsthm}
\theoremstyle{plain}

\newtheorem*{theoremA*}{Theorem A}
\newtheorem*{theoremB*}{Theorem B}

\usepackage{graphicx}
\usepackage{epsfig}

\makeatletter
\newcommand\xleftrightarrow[2][]{%
  \ext@arrow 9999{\longleftrightarrowfill@}{#1}{#2}}
\newcommand\longleftrightarrowfill@{%
  \arrowfill@\leftarrow\relbar\rightarrow}
\makeatother

\newtheorem{thm}{Theorem}
\newtheorem{lem}[thm]{Lemma}
\newtheorem{prop}[thm]{Proposition}
\newtheorem{defn}[thm]{Definition}
\newtheorem{cor}[thm]{Corollary}

\newtheorem{example}[thm]{Example}

\def\circa{\overset{\alpha}{\circ}}
\def\circb{\overset{\beta}{\circ}}

\def\deltaa{\delta_\alpha}
\def\deltab{\delta_\beta}

\def\circone{\overset{1}{\circ}}
\def\circtwo{\overset{2}{\circ}}
\def\circk{\overset{k}{\circ}}

\def\circequalsone{\overset{i=1}{\circ}}

\begin{document}

\title[A construction of Multidimensional Dubrovin-Novikov Brackets]{A construction of Multidimensional Dubrovin-Novikov Brackets}
\date{\today}

\author{Ian A.B. Strachan}

\address{School of Mathematics and Statistics\\ University of Glasgow\\Glasgow G12 8QQ\\ U.K.}
\email{ian.strachan@glasgow.ac.uk}

\keywords{Hamiltonian structures, integrable systems} \subjclass{}

\begin{abstract}
A method for the construction of classes of {\sl examples} of multi-dimensional, multi-component Dubrovin-Novikov brackets of hydrodynamic type is given. This is based on an extension of the original construction of Gelfand and Dorfman which gave examples of Novikov algebras in terms of structures defined from commutative, associative algebras. Given such an algebra, the construction involves only linear algebra. 
\end{abstract}

\maketitle

\tableofcontents

\section{Introduction}

There has been a recent renaissance in the study of multi-component, multi-dimensional Poisson brackets of Dubrovin-Novikov type. These brackets, in $D$-dimensions and with $N$-components, have the form
\begin{equation}\label{DNbracket}
\{I,J\} = \sum_{\alpha=1}^D \int \frac{\delta I}{\delta u^i} \mathcal{H}^{ij} \frac{\delta J}{\delta u^j} \, d^DX\,,
\end{equation}
where the Hamiltonian operator $\mathcal{H}^{ij\alpha}$ is a local, first-order operator of the form
\[
\mathcal{H}^{ij\alpha} = g^{ij\alpha}({\bf u}) \frac{\partial~}{\partial X^\alpha} + b^{ij\alpha}_k({\bf u}) u^k_{X^\alpha}\,.
\]
The study of such brackets were initiated by Dubrovin and Novikov in their seminal papers \cite{DN} and \cite{DN2}, and the full conditions (to ensure that the bracket is skew and satisfies the Jacobi identity)
were derived by Mokhov \cite{M}  and are given in Theorem \ref{mokhov1} below. However, the classification of such brackets remains an open problem.

\medskip

The most general statement, combining results from \cite{M} and \cite{M2} may be stated thus:

\begin{thm}
Consider a non-degenerate multi-dimensional, multi-component bracket.

\begin{itemize}

\item[$\bullet$] For $D=1\,,$ such a bracket may be reduced, by a change of variable, to a constant, or Darboux, form;

\item[$\bullet$] For $D\geq 2\,,$ such a bracket may be reduced, by a change of variable, to a linear form where
\[
\mathcal{H}^{ij\alpha}({\bf u}) = \left\{  \left( b^{ij\alpha}_k + b^{ji\alpha}_k\right) u^k \frac{d~}{d X^\alpha} + b^{ij\alpha} u^k_{X^\alpha} \right\} + \left\{ g^{ij\alpha}_o \frac{d~}{dX^\alpha} \right\}\,.
\]
Furthermore, tensorial conditions exist to determine whether a full reduction to constant form, where all components of all the metrics are constant,  may be achieved.

\end{itemize}

\end{thm}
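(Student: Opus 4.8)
The plan is to treat the two bullets separately, in each case translating the skew-symmetry and Jacobi requirements into the algebraic--differential relations on $g^{ij\alpha}$ and $b^{ij\alpha}_k$ recorded in Theorem~\ref{mokhov1} and then integrating those relations. For $D=1$ there is a single metric $g^{ij}(\mathbf{u})$; skew-symmetry forces $g^{ij}=g^{ji}$ and non-degeneracy makes it a genuine pseudo-Riemannian metric on the target. I would then invoke the classical Dubrovin--Novikov fact that the remaining conditions say precisely that $\Gamma^{j}_{ik}:=-g_{is}b^{sj}_{k}$ are the Christoffel symbols of the Levi-Civita connection of $g$ and that this connection is flat. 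Flatness is the integrability condition for the existence of coordinates $\tilde u^i(\mathbf{u})$ in which $g$ is constant; there $\Gamma\equiv0$, hence $b^{ij}_k\equiv0$, and the operator collapses to $\eta^{ij}\,d/dX$ with $\eta$ constant, non-degenerate and symmetric --- the Darboux/constant form.

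For $D\geq2$ I would begin in the same way, choosing an index, say $\alpha=1$, for which $g^{ij1}$ is non-degenerate (possible after a linear change of the independent variables $X^\alpha$, since non-degeneracy of the bracket means the symbol $g^{ij\alpha}\xi_\alpha$ is non-degenerate for generic $\xi$), and passing to its flat coordinates so that $g^{ij1}$ is constant and $b^{ij1}_k\equiv0$. The crux --- and what I expect to be the main obstacle --- is to show that the \emph{cross-relations} coupling the index $1$ to the remaining ones, together with the flatness conditions for each $g^{ij\alpha}$, are rigid enough to force, in these coordinates, every $b^{ij\alpha}_k$ to be constant (and hence, via the skew-symmetry relation $\partial_k g^{ij\alpha}=b^{ij\alpha}_k+b^{ji\alpha}_k$, every $g^{ij\alpha}$ to be affine). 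This is the substance of Mokhov's non-degeneracy theorem: one differentiates the compatibility relations repeatedly and uses invertibility of $g^{ij1}$ to kill all sufficiently high derivatives of the metrics and the $b$'s. Integrating the skew-symmetry relation then yields $g^{ij\alpha}(\mathbf{u})=(b^{ij\alpha}_k+b^{ji\alpha}_k)u^k+g^{ij\alpha}_0$ with all $b$'s and $g_0$'s constant, which is exactly the displayed linear normal form.

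Finally, for the closing clause I would observe that a linear bracket of this shape is the multi-dimensional analogue of a Lie--Poisson bracket twisted by a constant $2$-cocycle --- the $b^{ij\alpha}_k$ serving as structure constants, the $g^{ij\alpha}_0$ as the cocycle --- and that the only coordinate change preserving the linear form is an affine one $u^i\mapsto A^i_ju^j+c^i$. Whether that remaining freedom suffices to remove the linear parts, i.e.\ to bring \emph{all} components of \emph{all} metrics to constant form, is then a finite-dimensional, purely linear-algebraic question, governed by the vanishing of tensors built polynomially from $b^{ij\alpha}_k$ and $g^{ij\alpha}_0$ (a coboundary/triviality condition); spelling out those vanishing conditions gives the ``tensorial conditions'' of the statement. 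Thus the only genuinely multi-dimensional difficulty is the affineness step of the middle paragraph; the case $D=1$ is classical, and once the obstruction tensors are written down the passage from the linear form to constant form is linear algebra, precisely as the abstract promises.
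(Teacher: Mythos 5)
The paper never proves this statement: it is quoted as a combination of Mokhov's results \cite{M,M2} (with the $D=1$ case going back to \cite{DN,GD}), so there is no internal argument to compare against, and the honest benchmark is the literature you are implicitly reconstructing. Your outline does follow that standard route and is correct in shape: for $D=1$ the identification $\Gamma^j_{ik}=-g_{is}b^{sj}_k$ together with flatness of the non-degenerate $g$ is exactly the classical mechanism, and for $D\geq 2$ the plan of flattening one non-degenerate metric (available after a linear change of the $X^\alpha$, consistent with Lemma \ref{basiclemma} and the remark following it) and then using the mixed $\alpha\neq\beta$ relations of Theorem \ref{mokhov1} to force every $b^{ij\alpha}_k$ to be constant is precisely Mokhov's strategy. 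The caveat is that this middle step --- the only genuinely multi-dimensional content, as you yourself note --- is the step you do not carry out: you appeal to ``Mokhov's non-degeneracy theorem'' for it, which leaves your text at the same level as the paper's citation rather than a proof; to close it one must actually extract from the relations of Theorem \ref{mokhov1}, specialised to flat coordinates of $g^{ij1}$, the symmetry and antisymmetry properties of $\partial_q b^{jr\beta}_s$ that kill its derivatives, and nothing in your sketch guarantees this works beyond the assertion that it does. A second soft spot is the final clause: your claim that the only changes of variable preserving the linear form are affine is stated without justification and is stronger than needed. The easy repair is to observe that any coordinate system in which \emph{all} metrics are constant is in particular a flat coordinate system for the already-constant $g^{ij1}$, hence related to the given coordinates by an affine map (vanishing Christoffel symbols on both sides force vanishing second derivatives of the transition functions); with that, the reduction-to-constant-form question is indeed the finite-dimensional, tensorial one you describe, matching Mokhov's explicit obstruction conditions, though your cocycle/coboundary language remains a heuristic rather than a derivation of those tensors.
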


\noindent Without the non-degeneracy condition, much less is known.

\medskip

On substituting this linear structure into the defining condition in Theorem \ref{mokhov1} one obtains equations for the constants $b^{ij\alpha}_k$ and
$g^{ij\alpha}_o\,.$ These are best described in terms of an algebraic structure on regarding these constants are structure constants of a multiplication and inner produce, i.e.
\begin{eqnarray*}
e^i \circa e^j & = & b^{ij\alpha}_k e^k\,,\\
\langle e^i, e^j \rangle_\alpha & = & g^{ij\alpha}_o\,.
\end{eqnarray*}
These algebraic conditions were defined implicitly in \cite{M} and will be presented explicitly below. In the $D=1$ case they define a Novikov algebra. For $D\geq 2$ one has multiple Novikov algebras connected by crucial compatibility conditions. Finding a construction that generates {\sl examples} of multiplications satisfying these compatibility conditions is the raison d'\^etre of this article.

\medskip

The renaissance in the study of such brackets has come from two directions. Progress has been made in the classification of these brackets, including in the case where the structures are degenerate. Firstly \cite{FLS,Sav}:

\begin{itemize}

\item{} For $D=2$, a classification of non-degenerate and degenerate has been completed in the cases when the number of variables is small, with partial results for multi-component systems;

\item{} Full results for $D=3\,,N=3$ have been obtained, extending the results of Mokhov who studied the $D=3\,,N=1\,,2\,.$

\end{itemize}

\noindent Secondly, by developing the theory of Poisson Vertex Algebras, a deformation theory has been developed \cite{C}. Using the classification results for $D=N=2\,,$ the various cohomology group that govern such deformations have been constructed.

\medskip

However, there is currently no classification (in the sense of an explicit, exhaustive, list of examples) of such brackets. This is true even for linear Poisson brackets in the $D=1$ case: Novikov algebras have only been classified in the cases where the number of components is small, $N\leq 3\,$ \cite{BM1,BM1b,BG}. This lack of examples also hampers the programme of constructing a deformation theory: one requires something to deform.

\medskip

The aim of this paper is to provide explicit examples of multi-component, multi-dimensional Dubrovin-Novikov brackets. The approach is far from exhaustive - it is certainly not a step in any future classification programme, but it does provide a mechanism for the construction of examples using only linear algebra. The approach is based on a recent result in the study and classification of Novikov algebras \cite{Fil,Xu}.

\section{Multidimensional Dubrovin-Novikov brackets}

The full conditions for the bracket (\ref{DNbracket}) to define a Poisson bracket places very strong conditions on the functions $g^{ij\alpha}({\bf u})$ and $b^{ij\alpha}_k({\bf u})$ (see \cite{M,M2}):

\begin{thm}\label{mokhov1}
A bracket (\ref{DNbracket}) is a Poisson bracket if and only if the following relations for the coefficients of the bracket are satisfied:
\begin{eqnarray*}
g^{ij\alpha} &=& g^{ji\alpha}\,,\\
\frac{\partial g^{ij\alpha}}{\partial u^k} &=& b^{ij\alpha}_k + b^{ji\alpha}_k\,,\\
\sum_{(\alpha,\beta)} \left( g^{si\alpha} b^{jr\beta}_s\right. &-& \left.g^{sj\beta} b^{ir\alpha}_s\right)  =  0 \,,\\
\sum_{(i,j,k)} \left( g^{si\alpha}b^{jr\beta}_s\right. &-&\left. g^{sj\beta} b^{ir\alpha}_s\right)  =  0 \,,\\
\sum_{(\alpha,\beta)}  g^{si\alpha}\left( \frac{\partial b^{jr\beta}_s}{\partial u^q} - \frac{\partial b^{jr\beta}_q}{\partial u^s} \right)&+&
b^{ij\alpha_s} b^{sr \beta}_q - b^{ir\alpha}_s b^{sj\beta}_q   =  0 \,,\\
g^{si\beta} \frac{\partial b^{jr\alpha}}{\partial u^s} - b^{ij\beta}_s b^{sr\alpha}_q - b^{ir\beta}_s b^{js\alpha}_q & = &g^{sj\alpha} \frac{\partial b^{ir\beta}}{\partial u^s} - b^{ji\alpha}_s b^{sr\beta}_q - b^{jr\alpha}_s b^{is\beta}_q\,,\\
\frac{\partial~}{\partial u^k} \left[
g^{si\alpha}\left( \frac{\partial b^{jr\beta}_s}{\partial u^q} - \frac{\partial b^{jr\beta}_q}{\partial u^s} \right)
b^{ij\alpha}_s b^{sr \beta}_q - b^{ir\alpha}_s b^{sj\beta}_q \right]&+&\sum_{(i,j,r)} \left[ b^{si\beta}_q \left( \frac{\partial b^{jr\alpha}_k}{\partial u^s} - \frac{\partial b^{jr\alpha}_s}{\partial u^k}\right)\right]\\
+
\frac{\partial~}{\partial u^q} \left[
g^{si\beta}\left( \frac{\partial b^{jr\alpha}_s}{\partial u^k} - \frac{\partial b^{jr\alpha}_k}{\partial u^s} \right)
b^{ij\beta}_s b^{sr \alpha}_k - b^{ir\beta}_s b^{sj\alpha}_k \right]&+&\sum_{(i,j,r)} \left[ b^{si\alpha}_k \left( \frac{\partial b^{jr\beta}_q}{\partial u^s} - \frac{\partial b^{jr\beta}_s}{\partial u^q}\right)\right]=0\,.
\end{eqnarray*}

\end{thm}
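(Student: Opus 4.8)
The plan is to verify the two defining properties of a Poisson bracket --- skew-symmetry and the Jacobi identity --- directly for the local first-order operator $\mathcal{H}^{ij\alpha}$, extracting the stated relations as exactly the conditions under which each holds. I would work in the standard formalism for local Hamiltonian operators on the space of functionals: such an operator defines a Poisson bracket if and only if it is skew-adjoint with respect to the pairing $\int(\cdot)\,d^D X$ and its Schouten bracket with itself vanishes, $[\mathcal{H},\mathcal{H}]=0$. Both conditions reduce, after integration by parts, to the vanishing of certain differential polynomials in the coefficients $g^{ij\alpha}$, $b^{ij\alpha}_k$ and their $u$-derivatives, and the theorem is precisely the list obtained by setting each independent component of those polynomials to zero; the ``if and only if'' is then automatic, since every obstruction appears as one such component.

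For skew-symmetry I would compute the formal adjoint of $\mathcal{H}^{ij\alpha}$ using $(\partial_{X^\alpha})^{*}=-\partial_{X^\alpha}$ together with the Leibniz rule, and then demand $(\mathcal{H}^{ij\alpha})^{*}=-\mathcal{H}^{ji\alpha}$ for each $\alpha$. Matching the coefficient of $\partial_{X^\alpha}$ and of $u^k_{X^\alpha}$ separately gives the first two relations, $g^{ij\alpha}=g^{ji\alpha}$ and $\partial g^{ij\alpha}/\partial u^k = b^{ij\alpha}_k+b^{ji\alpha}_k$. This part is routine.

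The substance is the Jacobi identity. Introducing three functionals $I,J,K$ and writing $\{\{I,J\},K\}+\mathrm{cyc}$ explicitly, one substitutes $\mathcal{H}$ twice (equivalently, one computes the functional trivector attached to $\mathcal{H}$). Repeated integration by parts leaves an integral of a differential polynomial that is trilinear in the variational covectors $p_i=\delta I/\delta u^i$, $q_j$, $r_k$ and their $X^\alpha$-derivatives. I would grade this expression by the total jet order of $(p,q,r)$ and, within each order, by how the $X$-derivatives are distributed --- which is what forces the symmetrizations over pairs $(\alpha,\beta)$ and cyclic triples $(i,j,r)$ visible in the statement --- and then read off the coefficient of each independent monomial. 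The top-order monomials produce the purely algebraic relations (the $\sum_{(\alpha,\beta)}$ and $\sum_{(i,j,k)}$ identities, which in $D=1$ encode the Novikov/left-symmetry condition on $\circa$ and for $D\geq 2$ couple the different products); the intermediate monomials yield the relations mixing $g$, $b$ and $\partial b/\partial u$; and the lowest-order monomials yield the final, most elaborate identity involving second $u$-derivatives. Vanishing of all of them is therefore necessary and sufficient.

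The main obstacle is purely organizational. The double application of $\mathcal{H}$ followed by full integration by parts generates a large number of terms, and the extra spatial index $\alpha$ --- absent in the classical one-dimensional Dubrovin--Novikov analysis --- roughly doubles the bookkeeping, since every contraction now carries an $(\alpha,\beta)$ pair that must be split into its symmetric and antisymmetric parts. Care is needed (i) to perform the integrations by parts consistently so that no monomial is counted twice, and (ii) to check that the resulting list of coefficients is a complete and irredundant set of independent conditions rather than an overdetermined one. Organizing the calculation by the jet-order grading above is what keeps it tractable; for the detailed computation we refer to Mokhov \cite{M,M2}.
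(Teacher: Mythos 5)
The paper does not actually prove this theorem: Theorem \ref{mokhov1} is imported verbatim from Mokhov \cite{M,M2}, with the surrounding text only citing those works, so there is no in-paper argument to compare yours against. Your strategy --- skew-adjointness of $\mathcal{H}^{ij\alpha}$ for the first two relations, then expansion of the Jacobi identity (equivalently the vanishing of the Schouten bracket $[\mathcal{H},\mathcal{H}]$), integration by parts, and extraction of coefficients of independent monomials in the variational covectors and their $X^\alpha$-derivatives --- is the correct and indeed the standard route, and is exactly how the conditions were obtained in the cited literature.

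That said, as a proof of \emph{this} statement your text is a plan rather than a derivation: none of the five Jacobi-identity relations is actually produced, and at the decisive point you defer ``the detailed computation'' to the same references the paper cites. The gap is precisely where you wave it away with ``the if and only if is then automatic, since every obstruction appears as one such component.'' It is not automatic: the trilinear density obtained after substituting $\mathcal{H}$ twice is determined only modulo total $X^\alpha$-divergences, so monomials such as $p_i\, q_{j,X^\alpha}\, r_{k,X^\beta}$ are not independent, and the genuine conditions are certain symmetrized combinations --- which is exactly why the stated relations carry the $\sum_{(\alpha,\beta)}$ and cyclic $\sum_{(i,j,k)}$, $\sum_{(i,j,r)}$ structure, and why in $D\geq 2$ one also gets the unsymmetrized mixed relation (\ref{D})-type condition that has no $D=1$ analogue. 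To claim necessity \emph{and} sufficiency you must exhibit a complete reduction of the density to a normal form modulo divergences and verify that its independent coefficients are precisely the listed seven families, neither more nor fewer; that bookkeeping is the content of Mokhov's theorem, and without it your argument establishes only that \emph{some} list of this general shape exists, not the specific list asserted.
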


\noindent The following Lemma, again from \cite{M,M2}, follows from the analysis of the above conditions.

\begin{lem}\label{basiclemma}
Given a multidimensional Poisson bracket of the form (\ref{DNbracket}), for each $\alpha\,,$ the corresponding summand on the right-hand side of (\ref{DNbracket}) is a one-dimensional Poisson bracket of hydrodynamic type.
\end{lem}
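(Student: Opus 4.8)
The plan is to fix an index $\alpha$ and show that the single summand
\[
\{I,J\}_\alpha = \int \frac{\delta I}{\delta u^i}\left( g^{ij\alpha}({\bf u})\frac{\partial~}{\partial X^\alpha} + b^{ij\alpha}_k({\bf u})u^k_{X^\alpha}\right)\frac{\delta J}{\delta u^j}\, d^D X
\]
is itself a Poisson bracket of hydrodynamic type, which after integrating out the $D-1$ inessential variables $X^\beta$, $\beta\neq\alpha$, amounts to checking that the coefficients $g^{ij\alpha}$ and $b^{ij\alpha}_k$ (for this one fixed value of $\alpha$) satisfy the Dubrovin--Novikov conditions in dimension one. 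First I would recall that, by the classical result of Dubrovin and Novikov, the one-dimensional conditions are precisely: $g^{ij\alpha}=g^{ji\alpha}$; $\partial g^{ij\alpha}/\partial u^k = b^{ij\alpha}_k + b^{ji\alpha}_k$; the algebraic symmetry $g^{si\alpha}b^{jr\alpha}_s = g^{sj\alpha}b^{ir\alpha}_s$; and the flatness/vanishing-curvature condition $g^{si\alpha}\big(\partial_q b^{jr\alpha}_s - \partial_s b^{jr\alpha}_q\big) + b^{ij\alpha}_s b^{sr\alpha}_q - b^{ir\alpha}_s b^{sj\alpha}_q = 0$ together with its cyclic consequence (these are exactly the statements that $g^{ij\alpha}$ is a flat contravariant metric and $b^{ij\alpha}_k = -g^{is\alpha}\Gamma^{j\alpha}_{sk}$ for its Levi-Civita connection).

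The key observation is that each of these one-dimensional conditions is obtained from the list in Theorem \ref{mokhov1} by restricting the multi-index sums to the diagonal case $\beta=\alpha$. The first two conditions of Theorem \ref{mokhov1} already carry a free index $\alpha$ and give the required symmetry and the compatibility of $g^{ij\alpha}$ with $b^{ij\alpha}_k$ verbatim. For the symmetrized conditions, I would set $\beta=\alpha$: the symmetrization $\sum_{(\alpha,\beta)}$ over the pair then forces $g^{si\alpha}b^{jr\alpha}_s = g^{sj\alpha}b^{ir\alpha}_s$ (the one-dimensional algebraic constraint), and the $\sum_{(i,j,k)}$-symmetrized relation becomes its cyclic counterpart. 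Likewise, putting $\beta=\alpha$ in the fifth relation of Theorem \ref{mokhov1} yields exactly the one-dimensional zero-curvature equation $g^{si\alpha}(\partial_q b^{jr\alpha}_s - \partial_s b^{jr\alpha}_q) + b^{ij\alpha}_s b^{sr\alpha}_q - b^{ir\alpha}_s b^{sj\alpha}_q = 0$, and the sixth and seventh relations either become trivial or reduce to differential consequences of the ones already obtained. Hence every defining relation for a one-dimensional Dubrovin--Novikov bracket with coefficients $(g^{ij\alpha}, b^{ij\alpha}_k)$ is a special case of the relations guaranteed by Theorem \ref{mokhov1}, and the Lemma follows.

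The step I expect to be the main obstacle is verifying the fifth (zero-curvature) relation and, more delicately, confirming that the last two relations of Theorem \ref{mokhov1} do not impose anything genuinely new on the diagonal: one must check that, upon setting $\beta=\alpha$, every term either cancels in pairs under the stated symmetrizations or is a $\partial/\partial u^k$- and $\partial/\partial u^q$-derivative of expressions that already vanish by the fifth relation. This is a bookkeeping exercise in tracking index symmetrizations rather than a conceptual difficulty; the one subtlety worth flagging is that the identification of the algebraic symmetry condition requires using skew-symmetry of the bracket (the first two relations of Theorem \ref{mokhov1}) to pass between $b^{ij\alpha}_k$ and $b^{ji\alpha}_k$, so these must be invoked before the diagonal restriction is made.
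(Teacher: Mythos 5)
Your central idea --- fix $\alpha$, set $\beta=\alpha$ in the relations of Theorem \ref{mokhov1}, and observe that the diagonal specializations are exactly the defining relations of a one-dimensional bracket of hydrodynamic type --- is precisely the argument the paper has in mind (the paper gives no independent proof; it states that the Lemma ``follows from the analysis of the above conditions'' and cites Mokhov, and later remarks in the linear setting that putting $\alpha=\beta$ recovers the one-dimensional Novikov/quasi-Frobenius conditions). So the route is the right one.

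There is, however, one genuine flaw in how you close the argument. You identify ``the one-dimensional conditions'' with the short list (symmetry, $\partial_k g^{ij\alpha}=b^{ij\alpha}_k+b^{ji\alpha}_k$, the algebraic relation $g^{si\alpha}b^{jr\alpha}_s=g^{sj\alpha}b^{ir\alpha}_s$, and the zero-curvature relation), glossed as ``$g^{ij\alpha}$ is a flat contravariant metric and $b^{ij\alpha}_k=-g^{is\alpha}\Gamma^{j\alpha}_{sk}$.'' That characterization is only valid when $g^{ij\alpha}$ is non-degenerate, whereas the Lemma --- and the paper explicitly, in its two Classes of structures --- makes no non-degeneracy assumption; indeed for the homogeneous (linear) metrics of Class A individual summands are typically degenerate at ${\bf u}=0$ and may be identically degenerate. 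Correspondingly, your fallback claim that the sixth and seventh families of relations ``either become trivial or reduce to differential consequences of the ones already obtained'' on the diagonal is false in general: for degenerate $g^{ij\alpha}$ (e.g.\ $g^{ij\alpha}\equiv 0$, where skew-symmetry forces $b^{ij\alpha}_k+b^{ji\alpha}_k=0$ and the Jacobi identity imposes independent quadratic and derivative conditions on $b^{ij\alpha}_k$ alone) these diagonal relations are genuinely part of the one-dimensional criteria and are not consequences of the first five. The fix is not more bookkeeping but less: the complete set of one-dimensional conditions, valid without any non-degeneracy hypothesis, is literally the $\beta=\alpha$ specialization of the full list in Theorem \ref{mokhov1} (this is Mokhov's $D=1$ theorem), so every relation you need is one of the relations already guaranteed, the sixth and seventh included, and nothing has to be re-derived as a consequence of the others.
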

\noindent Since one may perform linear transformations in the $X^\alpha$-variables it follows that any linear combination of such one-dimensional Hamiltonian structures must also be Hamiltonian and hence any pair of summands in (\ref{DNbracket}) defines a $(1+1)$-dimensional biHamiltonian structure.

\section{Linear metrics}

In this section we consider families of metrics $g^{ij\alpha}\,,\alpha=1\,,\ldots\,,D$ that are of the form
\begin{equation}
g^{ij\alpha} = \left( b^{ij\alpha}_k + b^{ji\alpha}_k\right) u^k + g^{ij\alpha}_o\,.
\label{metric}
\end{equation}
By Lemma \ref{basiclemma}, this, for each separate $\alpha$, must define a $(1+1)$-dimensional Poisson structure of hydrodynamic type, and the conditions for this were derived by Gelfand and Dorfman \cite{GD} and further studied by Balinskii and Novikov \cite{BN}. These are best described in terms of a product
\[
e^i \circa e^j = b^{ij\alpha}_k e^k
\]
and inner product
\[
\langle e^i , e^j \rangle_\alpha = g^{ij\alpha}_o
\]
and the required conditions become the statement that this product defines a Novikov algebra with a compatible quasi-Frobenius inner product:

\begin{defn}

\begin{itemize}

\item[(a)]A Novikov algebra is a vector space $\mathcal{A}$ equipped with a composition (called multiplication) $\circ:\mathcal{A}\times\mathcal{A} \rightarrow\mathcal{A}$ with the properties

\begin{eqnarray*}
a \circ(b \circ c) - b \circ (a \circ c) & = & (a \circ b) \circ c - (b \circ a) \circ c\,,\\
(a \circ b) \circ c & = & (a \circ c) \circ b
\end{eqnarray*}
for all $a\,,b\,,c\in\mathcal{A}\,.$

\item[(b)] A compatible, or quasi-Frobenius inner product, on the algebra $\mathcal{A}$ is a map $\langle ~,~\rangle: \mathcal{A} \times \mathcal{A} \rightarrow \mathbb{C}$ with the property
\[
\langle a\circ b, c \rangle = \langle a, c \circ b\rangle\
\]
for all $a\,,b\,,c\in\mathcal{A}\,.$

\end{itemize}

\end{defn}

To find the full set of conditions one just substitutes equation (\ref{metric}) into Theorem \ref{mokhov1} which yields, on writing the resulting conditions algebraically in terms of the multiplications defined by the $b^{ij\alpha}_k$ and the inner products defined by the $g^{ij\alpha}_o\,:$

\begin{eqnarray}
\sum_{(\alpha,\beta)} ( u \circb v) \circa w + w \circa(u \circb v) - (w \circa v) \circb u - u \circb ( w \circa v) & = & 0 \,, \label{A}\\
\sum_{(u,v,w)} ( u \circb v) \circa w + w \circa(u \circb v) - (w \circa v) \circb u - u \circb ( w \circa v) & = & 0 \,,\label{B} \\
\sum_{(\alpha,\beta)} ( u \circa v) \circb w - ( u \circa w) \circb v & = & 0 \,,\label{C} \\
(u \circb v) \circa w - u \circb ( v \circa w) - ( v \circa u) \circb w + v \circa (u \circb w) & = & 0 \,, \label{D} \\
\sum_{(\alpha,\beta)} \langle u , v \circb w \rangle_\alpha - \langle v, u \circa w \rangle_\beta & = & 0 \,,\label{E} \\
\sum_{(u,v,w)} \langle u , v \circb w \rangle_\alpha - \langle v, u \circa w \rangle_\beta & = & 0 \,, \label{F}
\end{eqnarray}
Note, if $\alpha=\beta$ then these conditions reduce to the definition of a Novikov algebra with a compatible quasi-Frobenius inner product, in accordance with Lemma \ref{basiclemma}. Conditions (\ref{A}) and (\ref{C}) are equivalent to the requirement that the product
\[
u \circ v = u \circa v + u \circb v
\]
defines a Novikov algebra for arbitrary $\alpha$ and $\beta\,,$ and a simple calculation shows that condition (\ref{B}) follows from conditions (\ref{A},\ref{C}) and (\ref{D})\,. Condition (\ref{E}) is equivalent to the requirement that inner product
\[
\langle u,v \rangle = \langle u,v \rangle_\alpha + \langle u,v \rangle_\beta
\]
is quasi-Frobenius with respect to the product $u \circ v = u \circa v + u \circb v\,.$

We thus obtain the definition
of the algebra $\mathcal{B}\,:$

\begin{defn}
The algebra $\mathcal{B}$ consists of the vector space $\mathbb{C}^N$ equipped with multiplications $\circa$ and inner-products $\langle~,~\rangle_\alpha\,,\alpha=1\,,\ldots\,, D$ such that:

\begin{itemize}

\item[(i)] the multiplication $\circa$ defines Novikov algebra with a compatible, quasi-Frobenius inner-product $\langle~,~\rangle_\alpha$ for all $\alpha=1\,,\ldots\,,D\,;$

\item[(ii)] the sum $ u \circ v = u \circa v + u \circb v$ defines a Novikov algebra with a compatible, quasi-Frobenius inner-product $\langle u,v \rangle = \langle u,v \rangle_\alpha + \langle u,v \rangle_\beta$ for all $\alpha\,,\beta=1\,,\ldots\,,D\,;$

\item[(iii)] the following compatibility condition holds:

\begin{itemize}

\item[(a)]
\[
(u \circb v) \circa w - u \circb (v \circa w) = ( v \circa u) \circb w - v \circa ( u \circb w)\,
\]
\item[(b)]
\[
\sum_{(u,v,w)} \langle u , v \circb w \rangle_\alpha - \langle v, u \circa w \rangle_\beta = 0
\]
\end{itemize}
for all $\alpha\,,\beta=1\,,\ldots\,,D$ and for all $u,v,w \in \mathbb{C}^N\,.$

\end{itemize}

\end{defn}

\noindent This algebra was defined implicitly in \cite{M} - here we just write it out explicitly. We will concentrate on the explicit construction of algebras $\mathcal{B}$ in the following two cases:

\medskip

{\noindent\underline{Class A:}} homogeneous structures, i.e. $g^{ij\alpha}_o=0$ for all $\alpha\,;$

\medskip

{\noindent\underline{Class B:}} one of the terms is constant and the remaining terms homogeneous, i.e.

\begin{eqnarray*}
g^{ij\alpha}_o & = & 0\,,\qquad \circa \neq 0\,, \quad \alpha=2\,,\ldots\,,D\,;\\
g^{ij,\alpha=1}_o & \neq & 0 \,, \quad \circequalsone = 0 \,,\\
\end{eqnarray*}

\noindent with, in both cases, no requirement on the non-degeneracy of the structures. Note though, if {\sl one} of the metrics, say $g^{ij,\alpha=1}$ is non-degenerate, and hence a flat metric, one may introduce flat-coordinates in which $\circequalsone=0\,,$ and ${\rm det}(g^{ij,\alpha=1}_o) \neq 0\,,$ and so the second case above is, if one of the individual structures is non-degenerate, a subclass of the first. In case (b) the conditions linking the multiplications and innner-products reduce to the following:
\[
\langle u \circb v,w \rangle_{\alpha=1} = \langle u, w \circb v \rangle_{\alpha=1}\,,
\]
i.e. the single inner product is quasi-Frobenius with respect to each of the multiplications, and
\[
\sum_{(u,v,w)} \langle u, v \circb w \rangle_{\alpha=1} = 0 \,,
\]
as derived in \cite{M,M2}.

\section{A construction}

Examples of Novikov algebras are easy to construct. In the original paper of Gelfand and Dorfman \cite{GD} the following Novikov product was found
\[
a \circ b = a \cdot \delta b\,.
\]
Here $\cdot$ is an associative, commutative product and $\delta$ a derivation on this algebra. This result was extended by \cite{Fil} to a product
\[
a \circ b = v \, a \cdot b + a \cdot \delta b
\]
where $v$ is a constant element of the underlying field. In \cite{Xu} this was further extended to a product
\[
a \circ b = v \cdot a \cdot b + a \cdot \delta b
\]
where $v$ is a constant element in the algebra. Note that if an inner product $\langle~,~\rangle$ is Frobenius with respect to the product $\cdot$ then it is automatically quasi-Frobenius with respect to
the above Novikov product.

Since $v$ is a constant and derivations form a vector space one can easily construct pencils of Novikov algebras,
\begin{eqnarray*}
a \circa b & = & v_\alpha \cdot a \cdot b + a \cdot \deltaa b \,, \\
a \circb b & = & v_\beta \cdot a \cdot b + a \cdot \deltab b \,, \\
\end{eqnarray*}
(here we concentrate on the case where $v$ is a constant vector rather than a scalar. A near identical result holds if one ignores the first product and regard
$v$ as a constant scalar). Note, the labels $\alpha\,,\beta$ on the derivation just denotes its pairing with the $v_\alpha\,,v_\beta$ in the first term of these products. Clearly
\begin{eqnarray*}
u \circ v & = & u \circa v + u \circb v \,, \\
& = & (v_\alpha + v_\beta) \cdot u \cdot v + u \cdot (\deltaa+\deltab) v  \\
\end{eqnarray*}
defines a Novikov algebra.

Note, without further restrictions on the constant vectors and derivations, this result provides a construction of pencils of flat homogeneous metrics, and hence examples of bi-Hamiltonian structures in $(1+1)$-dimensions. To find Hamiltonian structures in $(1+D)$-dimensions within this class of pencils of Novikov algebras places restrictions on the data $\{v_\alpha\,,\deltaa\}$.

\begin{prop}
Let $\mathcal{A}$ be a commutative, associative algebra with product $\cdot\,:\mathcal{A} \times \mathcal{A} \rightarrow \mathcal{A}$ and space of derivations
${\sl der}(\mathcal{A})\,.$ Consider the set $\{v_\alpha\,,\deltaa\,:\alpha=1\,, \ldots \,, M\}$ of constant vectors and derivations with the properties
\begin{eqnarray*}
[ \deltaa, \deltab ] & = & \alpha \cdot \deltab - \beta \cdot \deltaa \,, \\
\deltaa (v_\beta) - \deltab (v_\alpha) & = & 0 \,.
\end{eqnarray*}
Then the set $\{ \mathbb{C}^N\,; \circa\,:\alpha=1\,, \ldots \,, M\}$ is a $\mathcal{B}$-algebra where the products $\circa$ are defined by
\[
u \circa v = v_\alpha \cdot u \cdot v + u \cdot \deltaa v\,.
\]
This defines an $M$-dimensional, $N$-component Hamiltonian structure.

\end{prop}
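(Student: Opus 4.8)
The plan is to verify directly that the products $\circa$ meet the three requirements in the definition of a $\mathcal B$-algebra; by the reductions recorded just after conditions (\ref{A})--(\ref{F}), this is the same as checking that the structure constants of the $\circa$, together with $g^{ij\alpha}_o=0$, satisfy (\ref{A})--(\ref{F}), and hence, on substituting the linear metric (\ref{metric}) into Theorem \ref{mokhov1}, that one obtains a Poisson bracket of the form (\ref{DNbracket}) in $D=M$ dimensions with $N=\dim_{\mathbb C}\mathcal A$ components. Since every $g^{ij\alpha}_o$ vanishes, each inner product $\langle\ ,\ \rangle_\alpha$ is the zero form, so the ``quasi-Frobenius'' clauses of (i) and (ii), the whole of (iii)(b), and conditions (\ref{E}) and (\ref{F}) hold vacuously. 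Thus only three things need checking: (i) each $\circa$ is Novikov; (ii) each sum $\circa+\circb$ is Novikov; and (iii)(a), i.e.\ condition (\ref{D}).

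For (i) and (ii) no hypothesis on the data $\{v_\alpha,\deltaa\}$ beyond those already assumed is required. By the result of Xu quoted above, for any fixed $c\in\mathcal A$ and any $\delta\in\mathrm{der}(\mathcal A)$ the bracket $a\diamond b:=c\cdot a\cdot b+a\cdot\delta b$ is Novikov; taking $(c,\delta)=(v_\alpha,\deltaa)$ gives (i). Because $\mathrm{der}(\mathcal A)$ is a vector space, $\deltaa+\deltab$ is again a derivation and $v_\alpha+v_\beta\in\mathcal A$, so applying the same statement to $(v_\alpha+v_\beta,\deltaa+\deltab)$ and observing that $u\circa v+u\circb v=(v_\alpha+v_\beta)\cdot u\cdot v+u\cdot(\deltaa+\deltab)v$ gives (ii). Together with (iii)(a) this also yields (\ref{B}).

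The real content is (iii)(a)$=$(\ref{D}), and this is where the two hypotheses on $\{v_\alpha,\deltaa\}$ enter. I would expand both sides using only commutativity and associativity of $\cdot$ and the Leibniz rule for $\deltaa,\deltab$; after the monomials cancel in pairs, what remains is
\begin{align*}
&(u\circb v)\circa w-u\circb(v\circa w)-(v\circa u)\circb w+v\circa(u\circb w) \\
&= (u\cdot v\cdot w)\big(\deltaa(v_\beta)-\deltab(v_\alpha)\big)+(u\cdot v)\big(v_\beta\,\deltaa(w)-v_\alpha\,\deltab(w)+[\deltaa,\deltab](w)\big)\,.
\end{align*}
The first summand on the right is killed by the hypothesis $\deltaa(v_\beta)=\deltab(v_\alpha)$; the second is killed by reading $[\deltaa,\deltab]=v_\alpha\cdot\deltab-v_\beta\cdot\deltaa$ as an operator identity, so that $[\deltaa,\deltab](w)=v_\alpha\,\deltab(w)-v_\beta\,\deltaa(w)$ exactly cancels the remaining two terms. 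Hence (\ref{D}) holds, and with (i) and (ii) all of (\ref{A})--(\ref{F}) are satisfied, so $\{\mathbb C^N;\circa\}$ is a $\mathcal B$-algebra; the final assertion of the proposition then follows from Theorem \ref{mokhov1} via the correspondence of the previous section between $\mathcal B$-algebras and linear Dubrovin--Novikov brackets.

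I expect the only genuine obstacle to be the bookkeeping in the displayed identity: each of the four composite terms expands into a handful of monomials in $u,v,w$ and the $\delta$-images of $v_\alpha,v_\beta,w$, and one must organise the cancellations carefully to isolate the residual two brackets. A priori one might have feared further constraints (for instance on $\deltaa(v_\alpha)$, or a Jacobi-type relation among the $\deltaa$); the computation shows that these all drop out, which is precisely why the two stated conditions are exactly the right ones.
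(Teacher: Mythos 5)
Your proposal is correct and follows exactly the route the paper intends (the paper leaves the verification implicit): the zero inner products trivialise (\ref{E})--(\ref{F}) and the quasi-Frobenius clauses, conditions (\ref{A}) and (\ref{C}) follow from Xu's construction applied to $(v_\alpha,\deltaa)$ and $(v_\alpha+v_\beta,\deltaa+\deltab)$, and the expansion of (\ref{D}) leaves precisely the residue $(u\cdot v\cdot w)\bigl(\deltaa(v_\beta)-\deltab(v_\alpha)\bigr)+(u\cdot v)\bigl(v_\beta\deltaa(w)-v_\alpha\deltab(w)+[\deltaa,\deltab](w)\bigr)$, which the two hypotheses (with $\alpha\cdot\deltab-\beta\cdot\deltaa$ read as $v_\alpha\cdot\deltab-v_\beta\cdot\deltaa$, consistent with the paper's Corollary and Example) annihilate. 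I verified your displayed residual identity by direct expansion; it is correct, so no gap remains.
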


The following results are immediate:

\begin{cor} Let $\mathcal{A}$ be an associative, commutative algebra with a two-dimensional space of derivations ${\sl der}(\mathcal{A})$. Then the above construction gives a $D=2\,,$ ${\sl dim}(\mathcal{A})$-component Hamiltonian structure.
\end{cor}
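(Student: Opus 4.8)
The plan is to verify directly that the data $\{v_\alpha,\deltaa\}$ produce a $\mathcal{B}$-algebra, i.e.\ to check the six relations (\ref{A})--(\ref{F}) (equivalently axioms (i)--(iii)); the assertion that one then obtains an $M$-dimensional, $N$-component Hamiltonian structure is exactly the statement that, for the linear ansatz (\ref{metric}), those relations are the algebraic reformulation of Mokhov's conditions in Theorem \ref{mokhov1}. Since the construction is homogeneous (Class A), every inner product $\langle~,~\rangle_\alpha$ vanishes identically; hence conditions (\ref{E}), (\ref{F}) and axiom (iii)(b) hold trivially and ``compatible quasi-Frobenius'' becomes vacuous. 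So the whole verification reduces to the multiplicative conditions (\ref{A}), (\ref{B}), (\ref{C}), (\ref{D}).

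First I would dispatch axioms (i) and (ii) using the Gelfand--Dorfman--Filippov--Xu result \cite{GD,Fil,Xu}. Each product $u\circa v = v_\alpha\cdot u\cdot v + u\cdot\deltaa v$ is of the form $a\circ b = v\cdot a\cdot b + a\cdot\delta b$ with $v=v_\alpha\in\mathcal{A}$ and $\delta=\deltaa\in\mathrm{der}(\mathcal{A})$, hence a Novikov product; together with the zero inner product this is axiom (i). For axiom (ii), the sum $u\circ v := u\circa v + u\circb v = (v_\alpha+v_\beta)\cdot u\cdot v + u\cdot(\deltaa+\deltab)v$ is again of exactly this form, since $\deltaa+\deltab$ is a derivation and $v_\alpha+v_\beta\in\mathcal{A}$, so it too is Novikov. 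As noted after (\ref{F}), this is precisely what (\ref{A}) and (\ref{C}) demand, and (\ref{B}) was already observed to follow from (\ref{A}), (\ref{C}) and (\ref{D}). Thus the entire argument collapses to verifying (\ref{D}), equivalently the compatibility axiom (iii)(a).

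The core computation is therefore the expansion of
\[
(u\circb v)\circa w - u\circb(v\circa w) - (v\circa u)\circb w + v\circa(u\circb w)\,.
\]
Substituting the definitions of $\circa$ and $\circb$ and expanding all four triple products --- using only commutativity and associativity of $\cdot$ and the Leibniz rule for $\deltaa$ and $\deltab$ --- the monomials quadratic in the constant vectors cancel pairwise by commutativity, and after collecting terms the alternating sum reduces to
\[
u\cdot v\cdot w\,\bigl(\deltaa(v_\beta)-\deltab(v_\alpha)\bigr)\;+\;u\cdot v\cdot\bigl(v_\beta\cdot\deltaa w-v_\alpha\cdot\deltab w+[\deltaa,\deltab]\,w\bigr)\,.
\]
The two hypotheses on $\{v_\alpha,\deltaa\}$ are exactly what annihilate these two groups: the first parenthesis vanishes by $\deltaa(v_\beta)=\deltab(v_\alpha)$, and the second by reading the commutator relation $[\deltaa,\deltab]=v_\alpha\,\deltab-v_\beta\,\deltaa$ as an operator identity applied to $w$. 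Hence (\ref{D}) holds, all of (\ref{A})--(\ref{F}) are satisfied, and by Theorem \ref{mokhov1} the ansatz (\ref{metric}) with $g^{ij\alpha}_o=0$ defines a genuine $M$-dimensional, $N$-component Dubrovin--Novikov bracket. The Corollary is then simply the case $M=2$, a two-dimensional $\mathrm{der}(\mathcal{A})$ always carrying data of the required type (for instance with the $v_\alpha$ scalar, every two-dimensional Lie algebra being normalisable so that the commutator relation holds).

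I expect the only genuine obstacle to be organisational rather than conceptual: keeping track of the roughly twenty monomials produced by the expansion and confirming that the bookkeeping closes using \emph{exactly} the two stated identities, with no further constraint on $\{v_\alpha,\deltaa\}$ forced. Two consistency checks are worth building in: that the right-hand side $v_\alpha\,\deltab-v_\beta\,\deltaa$ of the commutator relation is itself a derivation --- which it is, multiplication of a derivation by an element of a commutative algebra again being a derivation --- so the hypothesis is well posed; and that the reduced expression is antisymmetric under the simultaneous exchange $\alpha\leftrightarrow\beta$, $u\leftrightarrow v$, its diagonal case $\alpha=\beta$ degenerating to the left-symmetry axiom of a single Novikov algebra, in agreement with Lemma \ref{basiclemma}.
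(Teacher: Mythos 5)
Your proposal is correct and follows essentially the paper's route: the Corollary is deduced from the Proposition with scalar constants, using that a two-dimensional $\mathrm{der}(\mathcal{A})$ is automatically closed under the commutator, so $[\delta_1,\delta_2]=k_1\delta_2-k_2\delta_1$ holds for some scalars $k_1,k_2$ and the condition $\delta_1(k_2)-\delta_2(k_1)=0$ is vacuous. The only difference is that you additionally verify the Proposition itself (the expansion of condition (\ref{D}) and the trivialisation of (\ref{E}), (\ref{F}) in the homogeneous case), a computation the paper states without proof, and your reduced expression and use of exactly the two hypotheses are correct.
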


\begin{proof} Since derivations form a Lie algebra and ${\sl der}(\mathcal{A})$ is two dimensional, then automatically
\[
[\delta_1,\delta_2] = k_1 \delta_2 - k_1 \delta_1
\]
for some constants $k_1$ and $k_2\,.$ Hence one obtains compatible products
\begin{eqnarray*}
a \circone b & = & k_1 \, a \cdot b + a \cdot \delta_1 b \,, \\
a \circtwo b & = & k_2 \, a \cdot b + a \cdot \delta_2 b
\end{eqnarray*}
and hence an algebra $\mathcal{B}= \{ \mathbb{C}^N\,; \circone\,,\circtwo \}$ and a 2-dimensional, ${\sl dim}(\mathcal{A})$-component, Hamiltonian structure
\end{proof}

\begin{lem} Consider the space ${\widetilde{\sl der}}(\mathcal{A})\subset {\sl der}(\mathcal{A})$ of derivations with the properties
\begin{eqnarray*}
[ \deltaa, \deltab ] & = & \alpha \cdot \deltab - \beta \cdot \deltaa \,, \\
\deltaa (v_\beta) - \deltab (v_\alpha) & = & 0 \,
\end{eqnarray*}
for some set of vectors $\{ v_\alpha\,,i=1\,,\ldots\,,M\}.$
Then this is a Lie algebra with respect to the structure $[\deltaa,\deltab]=\deltaa\deltab - \deltab\deltaa\,.$
\end{lem}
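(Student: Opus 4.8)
The plan is to exhibit ${\widetilde{\sl der}}(\mathcal{A})$ as a Lie subalgebra of ${\sl der}(\mathcal{A})$ under the commutator bracket $[\delta,\delta']=\delta\delta'-\delta'\delta$. This framing makes antisymmetry and the Jacobi identity automatic: both hold for the commutator of any linear operators, hence on all of ${\sl der}(\mathcal{A})$, and the only elementary input is that the commutator of two derivations of a commutative associative algebra is again a derivation (expand $[\delta,\delta'](ab)$ by the Leibniz rule; the mixed first-order terms cancel). So everything reduces to two points: ${\widetilde{\sl der}}(\mathcal{A})$ is a $\mathbb{C}$-subspace of ${\sl der}(\mathcal{A})$, and it is closed under the bracket.

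First I would pin down the object: read ${\widetilde{\sl der}}(\mathcal{A})$ as the $\mathcal{A}$-submodule of ${\sl der}(\mathcal{A})$ generated by a family $\{\deltaa,v_\alpha:\alpha=1,\dots,M\}$ satisfying the two displayed relations. Allowing left multiplication by $\mathcal{A}$ is natural — if $a\in\mathcal{A}$ and $\delta\in{\sl der}(\mathcal{A})$ then $a\cdot\delta$ is again a derivation, by commutativity and associativity — and, as the closure step shows, it is also forced. That ${\widetilde{\sl der}}(\mathcal{A})$ is a $\mathbb{C}$-subspace is then immediate; moreover the two defining relations are stable under linear combinations, since by bilinearity of the commutator and $\mathbb{C}$-linearity of derivations, if $\delta,\delta'$ belong to such a family with vectors $v,v'$ then $\lambda\delta+\mu\delta'$ belongs to one with vector $\lambda v+\mu v'$.

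The computation at the centre is the Leibniz-type identity
\[
[\,a\,\delta,\ b\,\delta'\,]\;=\;\big(a\,\delta(b)\big)\cdot\delta'\;-\;\big(b\,\delta'(a)\big)\cdot\delta\;+\;(ab)\cdot[\delta,\delta']\,,\qquad a,b\in\mathcal{A},\ \delta,\delta'\in{\sl der}(\mathcal{A})\,,
\]
obtained by a one-line expansion using commutativity and associativity. Applied to two generators $\deltaa,\deltab$ and combined with the first defining relation $[\deltaa,\deltab]=v_\alpha\cdot\deltab-v_\beta\cdot\deltaa$, it shows $[\deltaa,\deltab]\in{\widetilde{\sl der}}(\mathcal{A})$. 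For arbitrary module elements $\sum_\alpha a_\alpha\deltaa$ and $\sum_\beta b_\beta\deltab$, additivity of the commutator reduces the bracket to $\sum_{\alpha,\beta}[\,a_\alpha\deltaa,\ b_\beta\deltab\,]$, and each summand is then, by the identity above together with the first relation, a finite sum of terms $c\cdot\delta_\gamma$ with $c\in\mathcal{A}$ — hence an element of ${\widetilde{\sl der}}(\mathcal{A})$. This gives closure, and with it the Lemma.

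The second defining relation $\deltaa(v_\beta)=\deltab(v_\alpha)$ is not needed for the Lie-algebra property itself; its role is to guarantee — via the Proposition — that the family $\{v_\alpha,\deltaa\}$ assembles into a genuine $\mathcal{B}$-algebra, and a short check shows it too is stable under the bracket, so the compatible data is carried along consistently. I expect no real obstacle beyond the point flagged above: fixing the correct ambient object — the $\mathcal{A}$-module generated by the family, not merely the $\mathbb{C}$-span of a chosen generating set — so that the terms produced by the bracket, which by the first relation have the form $v_\alpha\cdot\delta_\beta$, land back inside. Once that is settled, the rest is the routine bookkeeping sketched here.
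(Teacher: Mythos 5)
Your argument is correct, and since the paper's own proof consists of the single phrase ``purely computational'', what you have written is essentially the computation the paper omits: antisymmetry and the Jacobi identity are automatic for the operator commutator, the commutator of two derivations of a commutative associative algebra is again a derivation, and the only substantive point is closure under the bracket, which your identity $[a\,\delta, b\,\delta'] = (a\,\delta(b))\cdot\delta' - (b\,\delta'(a))\cdot\delta + (ab)\cdot[\delta,\delta']$ settles once the first defining relation $[\deltaa,\deltab]=v_\alpha\cdot\deltab - v_\beta\cdot\deltaa$ is substituted. The one place you go beyond the paper is in pinning down the underlying space, and this is a genuine improvement rather than a deviation: as you observe, the $\mathbb{C}$-span of the chosen $\deltaa$ need not be closed under the bracket, because the ``structure constants'' $v_\alpha$ are elements of $\mathcal{A}$ rather than scalars, so reading ${\widetilde{\sl der}}(\mathcal{A})$ as the $\mathcal{A}$-submodule of ${\sl der}(\mathcal{A})$ generated by the family (using that $a\cdot\delta$ is again a derivation by commutativity and associativity) is exactly what makes the statement literally true; the paper glosses over this. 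One minor caveat: your parenthetical claim that the second relation $\deltaa(v_\beta)=\deltab(v_\alpha)$ is ``stable under the bracket'' is asserted but not checked, and it is not obvious in the stated form since $[\deltaa,\deltab]$ does not come equipped with a canonical associated vector; however, as you yourself note, that relation plays no role in the Lie-algebra property, so this side remark does not affect the validity of your proof of the Lemma.
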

\begin{proof} Purely computational
\end{proof}

\noindent We illustrate this construction with the following example.

\begin{example}\label{basicexample}
Consider the commutative, associative algebra
\[
\mathcal{A} \cong \mathbb{C}[z]/{\langle z^N \rangle}
\]
with basis $e^i = z^{i-1}\,,i=1\,,\ldots\,,N$ (and with the convention that $e^i=0$ for $i>N$). In this basis the multiplication is just
\[
e^i \cdot e^j = e^{i+j-1}\,.
\]
The derivations on this algebra are
\[
\delta_k e^i = (i-1) e^{i+k-1}\,, \qquad k=1\,,\ldots\,,N-1
\]
and these form the Lie algebra $[\delta_i,\delta_j]=(j-i) \delta_{i+j-1}\,.$ In particular,
\begin{eqnarray*}
[\delta_1,\delta_k] & = & (k-1) \delta_k\,,\\
& = & (k-1) \delta_k - 0 \, \delta_1\,.
\end{eqnarray*}
Thus, by the above Corollary, one obtains pairs of compatible products, namely
\begin{eqnarray*}
u \circone v & = & (k-1) u \cdot v + u \cdot \delta_1 v\,,\\
u \circk v & = & 0 \, u \cdot v + u \cdot \delta_k v\,,
\end{eqnarray*}
or, in the above basis, the products
\begin{eqnarray*}
e^i \circone e^j & = & (j+k-2) e^{i+j-1}\,,\\
e^i \circk e^j & = & (j-1) e^{i+j+k-2}\,.
\end{eqnarray*}
Thus the construction yields $(N-2)$ pairs of compatible multiplications, and hence $(N-2)$ examples of $2$-dimensional, $N$-component Hamiltonian structures. The corresponding metrics are:
\begin{eqnarray*}
\left. g^{ij\alpha}\right|_{\alpha=1} & = & \left( i+j-2(k-2) \right) u^{i+j-1}\,,\\
\left. g^{ij\alpha}\right|_{\alpha=k} & = & \left( i+j-2 \right) u^{i+j+k-2}\,
\end{eqnarray*}
where $k\in\{2\,,\ldots\,,N-2\}\,.$
The first of these metrics is always non-degenerate and so may be, via a change of variable, transformed to a constant form. For example, if $N=3$ (so $k=2$) one obtains the metrics
\begin{eqnarray*}
\left. g^{ij\alpha}\right|_{\alpha=1} & = &
\left(
\begin{array}{ccc}
2 u^1 & 3 u^2 & 4 u^3 \\
3 u^2 & 4 u^3 & 0 \\
4 u^3 & 0 & 0
\end{array}
\right)\,,\\
\left. g^{ij\alpha}\right|_{\alpha=k} & = &
\left(
\begin{array}{ccc}
0 & u^3 & 0 \\
u^3 & 0 & 0 \\
0 & 0 & 0
\end{array}
\right)\,.
\end{eqnarray*}
Solving the associated Gauss-Manin equations (for a systematic approach to solving these equations for this class of metrics, see \cite{Str}) gives the coordinate transformation
\[
u^1 = v^1 v^3 + \frac{1}{8} (v^2)^2\,, \qquad  u^2 = v^2 (v^3)^2\,, \qquad u^3 = (v^2)^4
\]
and in the new coordinates one obtains the pair
\begin{eqnarray*}
\left. g^{ij\alpha}\right|_{\alpha=1} & = &
\left(
\begin{array}{ccc}
0 & 0 & 1 \\
0 & 4 & 0 \\
1 & 0 & 0
\end{array}
\right)\,,\\
\left. g^{ij\alpha}\right|_{\alpha=k} & = &
\left(
\begin{array}{ccc}
-\frac{1}{2} v^2 & v^3 & 0 \\
v^3 & 0 & 0 \\
0 & 0 & 0
\end{array}
\right)\,.
\end{eqnarray*}

\end{example}

\noindent Thus this examples furnishes examples in both of the classes of structures defined above. We end this section with another example, this giving a $4$-component, $3$-dimensional example.

\begin{example}
Consider the commutative, associative algebra with multiplication table
\[
\begin{array}{c|cccc}
\cdot & e^1 & e^2 & e^3 & e^4 \\
\hline
e^1 & e^1 & e^2 & e^3 & e^4 \\
e^2 & e^2 & e^4 & e^4 & 0 \\
e^3 & e^3 & e^4 & 0 & 0 \\
e^4 & e^4 & 0 & 0 & 0
\end{array}
\]
The derivations of this algebra are easily constructed:
\[
\delta_1
\left(
\begin{array}{c}
e^1 \\
e^2 \\
e^3 \\
e^4
\end{array}
\right) =
\left(
\begin{array}{c}
0  \\
e^2 \\
e^3 \\
2 e^4
\end{array}
\right)
\,,\qquad
\delta_2
\left(
\begin{array}{c}
e^1 \\
e^2 \\
e^3 \\
e^4
\end{array}
\right) =
\left(
\begin{array}{c}
0  \\
e^3 \\
2 e^3 \\
2 e^4
\end{array}
\right)
\,,
\]
\[
\delta_3
\left(
\begin{array}{c}
e^1 \\
e^2 \\
e^3 \\
e^4
\end{array}
\right) =
\left(
\begin{array}{c}
0  \\
e^4 \\
0 \\
0
\end{array}
\right)
\,,\qquad\delta_4
\left(
\begin{array}{c}
e^1 \\
e^2 \\
e^3 \\
e^4
\end{array}
\right) =
\left(
\begin{array}{c}
0  \\
0 \\
0 \\
e^4
\end{array}
\right)
\,,
\]
\noindent and the set $\{\delta_1\,,\delta_3\,,\delta_4\}$ satisfy the conditions
\begin{eqnarray*}
{[}\delta_1,\delta_3{]} & = & 0 \,.\, \delta_1 - 1 \,.\, \delta_3\,,\\
{[}\delta_1,\delta_4{]} & = & 0 \,.\, \delta_1 - 1 \,.\, \delta_4\,,\\
{[}\delta_3,\delta_4{]} & = & 0 \,.\, \delta_3 - 0 \,.\, \delta_4\,,
\end{eqnarray*}
and hence the above construction may be implemented. This gives the metrics
\[
g^{ij,\alpha=1} =
\left(
\begin{array}{cccc}
2 u^1 & 3 u^2 & 3 u^3 & 4 u^4 \\
3 u^2 & 4 u^4 & 4 u^4 & 0 \\
3 u^3 & 4 u^4 & 0 & 0 \\
4 u^4 & 0 & 0 & 0
\end{array}
\right)
\]
\[
g^{ij,\alpha=3} =
\left(
\begin{array}{cccc}
0 & u^4 & 0 & 0 \\
u^4 & 0 & 0 & 0 \\
0 & 0 & 0 & 0 \\
0 & 0 & 0 & 0
\end{array}
\right)\,,\qquad
g^{ij,\alpha=4} =
\left(
\begin{array}{cccc}
0 & 0 & u^4 & 0 \\
0 & 0 & 0 & 0 \\
u^4 & 0 & 0 & 0 \\
0 & 0 & 0 & 0
\end{array}
\right)\,.
\]
Since the first metric is non-degenerate one may easily solve the Gauss-Manin equations to find its flat coordinates. Applying this transformation to the full set of structures gives
\[
g^{ij,\alpha=1} =
\left(
\begin{array}{cccc}
0 & 0 & 0 & 1 \\
0 & 4 & 4 & 0 \\
0 & 4 & 0 & 0 \\
1 & 0 & 0 & 0
\end{array}
\right)
\]
\[
g^{ij,\alpha=3} =
\left(
\begin{array}{cccc}
-\frac{1}{2} v^3 & v^4 & 0 & 0 \\
v^4 & 0 & 0 & 0 \\
0 & 0 & 0 & 0 \\
0 & 0 & 0 & 0
\end{array}
\right)\,,\qquad
g^{ij,\alpha=4} =
\left(
\begin{array}{cccc}
-\frac{1}{2}(v^3-v^2) & 0 & v^4 & 0 \\
0 & 0 & 0 & 0 \\
v^4 & 0 & 0 & 0 \\
0 & 0 & 0 & 0
\end{array}
\right)\,.
\]
\end{example}

\medskip

\noindent This example may be easily generalized.

\section{Monodromy and multi-dimensional Hamiltonian structures}

Throughout this section we assume that one of the metrics, say $g^{ij,\alpha=1}$, is non-degenerate, and that the associated Novikov algebra $\mathcal{A}$ has a right identity. The flat coordinates are found by solving
the Gauss-Manin equations
\[
{}^{g^{\alpha=1}} \nabla d{\bf v} = 0\,,
\]
where ${}^{g^{\alpha=1}} \nabla$ is the Levi-Civita connection corresponding to the metric $g^{\alpha=1}\,.$ This is an over-determined system, but the compatibility conditions are precisely the flatness of the metric and hence are automatically satisfied.

Solutions of the Gauss-Manin equations will exhibit branching around the discriminant
\[
\Sigma=\{ {\bf u}\, | \,\Delta({\bf u})=0 \}
\]
where $\Delta({\bf u}) = \det\left(g^{ij}({\bf u})\right),$ and this is encapsulated in the associated monodromy group
\[
W(M) = \mu \left( \pi_1(M\backslash\Sigma) \right)
\]
(see, for example, \cite{D}). Explicitly, the continuation of a solution under a closed path $\gamma$ on $M\backslash\Sigma$ yields a transformation
\[
{\tilde v}^a ({\bf u}) = A^a_b(\gamma) v^b({\bf u}) + B^a(\gamma)
\]
with $A$ orthogonal with respect to the metric $g({\bf u})\,,$ and these generate a subgroup of $O(N)\,.$ Thus to every Novikov algebra $\mathcal{A}$ there is an associated monodromy group which we
denote $\mathcal{W}(\mathcal{A})\,.$

\begin{example}\label{basicexamplecont} For the algebra $\mathcal{A}$ defined by the first metric in Example \ref{basicexample} the monodomy group is $\mathcal{W}({\mathcal{A}}) \cong \mathbb{Z}_4[1,2,3]\,,$ which acts on the ${\bf v}$-variables
via
\begin{eqnarray*}
v^1 & \mapsto & \varepsilon^3 v^1 \,, \\
v^2 & \mapsto & \varepsilon^2 v^2 \,, \\
v^3 & \mapsto & \varepsilon^1 v^3
\end{eqnarray*}
where $\varepsilon^4=1\,.$
\end{example}
This monodromy group also controls the structure of the remaining metrics $g^{ij\alpha}$ for $\alpha\geq 2\,.$ On applying the transformation ${\bf u} = {\bf u}( {\bf{v}})$ to the entire pencil of
metrics
\[
g^{ij}({\bf u}) = \sum_{\alpha=1}^D k_\alpha g^{ij\alpha}({\bf u})\,, \qquad k_\alpha \in \mathbb{C}
\]
one knows that this transforms to the pencil
\[
g^{ij}({\bf v}) = \eta^{ij} + \sum_{\alpha=2}^D k_\alpha d^{ij\alpha}_k v^k
\]
for some constants $d^{ij\alpha}_k$ since the linearity property must be preserved under this change of variables. On applying the monodromy transformation to this pencil gives the follows invariance properties
for the component parts of the pencil:
\[
A^i_a \eta^{ab} A^j_b = \eta^{ij}
\]
and
\[
A^i_a d^{ab\alpha}_k A^j_b = d^{ab\alpha}_k
\]
and hence (assuming that the monodromy group acts such that $A^i_a = {\rm diag}( \varepsilon^{d_1}\,, \ldots \,, \varepsilon^{d_M})$ where $\varepsilon$ is the $(d_1+d_M)^{\rm th}$ root of unity) the constraints
\begin{eqnarray*}
\varepsilon^{d_a+d_b} \eta^{ab} & = & \eta^{ab} \,,\\
\varepsilon^{d_a + d_b - d_c} d^{ab\alpha}_c & = & d^{ab\alpha}_c
\end{eqnarray*}
and this constrains where the non-zero terms can appear: if
\begin{eqnarray*}
\varepsilon^{d_a+d_b} & \neq 0 & {\rm then~} \eta^{ab}=0\,,\\
\varepsilon^{d_a+d_b-d_c} & \neq 0 & {\rm then~} d^{ab\alpha}_c=0\,.
\end{eqnarray*}
This, coupled to the triangular structure of the transformation ${\bf u} = {\bf u}( {\bf{v}})$  which implies the the metrics $g^{ij\alpha}({\bf v})$ for $\alpha\geq 2$ must have entries on or above the antidiagonal, 
constrains these constants \cite{Str}.

\begin{example} In the above Example \ref{basicexamplecont}, the monodromy group $\mathcal{W} ( \mathcal{A} ) \cong \mathbb{Z}_4[1,2,3]$ and the only non-zero entries are $d^{11}_2\,,d^{12}_3$ and hence the pencil of metric takes the 
skeletal form
\[
g^{ij} = \left(\begin{array}{ccc} 0 & 0 & c_1 \\ 0 & c_2 & 0 \\ c_3 & 0 & 0 \end{array}\right) + \left(\begin{array}{ccc} d^{11}_2 v^2 & d^{12}_3 v^3 & 0 \\ d^{12}_3 v^3 & 0 & 0 \\ 0 & 0 & 0 \end{array}\right)\,.
\]
\end{example}
This methods only fixes the skeletal forms of the pencil: it does note fix the values of the constants. These have to be found by ensuring the resulting structures satisfy the $\mathcal{B}$-algebra conditions.

\section{Conclusion}
The above construction could be generalized: underlying the construction is a single commutative, associative algebra. One could start with pencils of compatible commutative, associative algebras and construct their derivations. One could then derive conditions for the resulting structures to satisfy the conditions in the definition og the $\mathcal{B}$-algebra. Another direction of possible research is to connect this construction with the theory of Segre forms, as used in \cite{FLS}. Conversely, normal forms has played no role in the theory of Novikov algebras, which has been more algebraic than geometric: it would be of interest to reconcile these two approaches.

Finally, it should be stressed that the construction presented here is far from exhaustive: it does not, in any way, claim to be part of a classification scheme - just a method to construct examples. But in an area where few examples are known, it does provide a method for the construction of multi-component, multi-dimensional brackets of Dubrovin-Novikov type.

\section*{Acknowledgements}
I would also like to thank  Blazej Szablikowski and Dafeng Zuo for various conversations concerning Novikov algebras and their role in the theory of Hamiltonian structures.

\end{document}